\documentclass[graybox]{svmult}

\usepackage{mathptmx}      
\usepackage{amsmath,amssymb}
\usepackage{graphicx}
\usepackage{cite}          
\usepackage{url}

\numberwithin{equation}{section}

\newcommand{\boldgreek}[1]{\mbox{\boldmath$#1$}}
\newcommand{\R}{I\kern-0.37emR}
\newcommand{\ny}{n\rightarrow\infty}

\begin{document}


\title*{Estimation of the Risk Measure\\ under a Nuisance Autoregression}
\author{Jana Jure\v{c}kov\' a \and Jan Picek}

\institute{Jana Jure\v{c}kov\' a \at
  Institute of Information Theory and Automation, Czech Academy of Sciences,\\
	Prague, Czech Republic,
  \email{jureckova@utia.cas.cz}
  \and
  Jan Picek \at
  Technical University of Liberec, Faculty of Science, Humanities and Education,\\ Liberec, Czech Republic,
  \email{jan.picek@tul.cz}
}

\maketitle
\abstract{
The goal of an experiment is to evaluate the profit, loss, or the amount of a physical entity over a period. The measurements 
$X_t$ can be influenced by the  values measured in the past; hence we describe the situation with an autoregression model, whose
autoregression coefficients are generally unknown. The variable of interest is the error term $Z_t$ of the model, which is the increment of 
$X_t$ with respect to the past, but itself unobservable.  The problem is to estimate various quantile functions of $Z$, as the risk measure of 
the loss or the related economic indicators.  
We construct an estimate of quantile functions of $Z$ in the situation that the inference 
is possible only by means of observations $X$. The proposed estimates are based on the R-estimators of autoregression coefficients, 
combined with the autoregression quantiles.  
}

\section{Introduction}

In the everyday practice we monitor various  indicators,  as the environmental, industrial, health, physical, economic  and others, 
in order to make a decision or to plan the next steps. These indicators measure the financial or health risk, the environmental and technical risk, 
and can also describe the income inequalities between the social groups. One meets them in the water management, in insurance, and elsewhere.
We monitor the values $X_t$, 
  related to their past values through the $p$-order autoregressive model (AR(p))
\begin{equation}
 X_{t}=\phi _{1}X_{t-1}+...+\phi _{p}X_{t-p}+Z_{t}, \quad t\in \mathbb{Z}
\label{1} 
\end{equation}
where the observations ${\mathbf X}_n=(X_{t-1},\ldots,X_{t-p})^{\top}, \; t=1,\ldots,n$ denote a gain, profit of some sort, or the loss and shortage, 
and $Z_t$ is an increment in gain at time $t$, with respect to period $[t-1,\ldots, t-p]$, under the validity of model (\ref{1}). 
The autoregression parameters  $\boldgreek\phi=(\phi_1,\ldots,\phi_p)$ are unknown, hence the values $Z_t, \; t=0,\pm1, \pm 2, \ldots$  are not
observable.

We wish to follow the behavior of the indicators, which  can be mathematically described as the quantile functionals of $Z$.\\
We usually assume that $Z$ has an absolutely continuous distribution function $F$ and quantile function 
$Q(\alpha)=F^{-1}(\alpha)  ; 0<\alpha<1,$
but their explicit forms are generally unknown. The problem is to estimate the functionals of $Q$, which are often the  
risk measures and other economic, industrial, physical, but also health and environmental indicators. 
We estimate such functionals, based on independently repeated observations $X_t, \; t=1,\ldots,n.$
The experiment is organized  so that the components  $Z_{t}, t=2,\ldots,n$, the increments of the loss, shortage etc., are independent. 
We assume that they are 
identically distributed (iid) according to a continuous distribution function $F$ with the quantile function $Q=F^{-1}$. They are generally unknown, 
but $F$ is assumed to increase on the set $\{ z : 0<F(z)<1\}$, where it satisfies the following conditions:
\begin{description}
\item[(A1)] The distribution function $F$ has a continuous density $f$, that is positive on the support of $F.$
\item[(A2)] $E(Z_{t}) =0,$  $Var(Z_{t})=\sigma ^{2}<\infty $,  $0< \int z^{4}dF(z)<\infty. $
\item[(A3)] We assume  that $\phi_{j}$ are such that all roots of the equation\\
$ z^{p}-\phi _{1}z^{p-1}-\phi _{2}z^{p-2}-...-\phi _{p}=0 $
are inside the unit circle.
\end{description}
Denote
\begin{eqnarray}\label{Y}
&&\mathbf{Y}_{t-1}=\left(X_{t-1},...,X_{t-p}\right)^{\top} \; \mbox{ and}\\ 
&&\mathbf{Y}_{t-1}^{\ast }=\left(1,X_{t-1},...,X_{t-p}\right) ^{\top}.\nonumber
\end{eqnarray}
 Moreover, let  $\boldgreek{\Sigma}_{n}$ and
$\boldgreek{\Sigma}_{n}^{\ast}$ denote the following matrices of the respective orders $n\times p$  and $n\times(p+1)$ :
\begin{equation}\label{Sigma}
\boldgreek{\Sigma}_n=\frac 1n \sum_{t=1}^n \mathbf{Y}_{t-1}\mathbf{Y}_{t-1}^{\top }, \qquad
\boldgreek{\Sigma}_n^*=\frac 1n \sum_{t=1}^n \mathbf{Y}_{t-1}^{\ast }\mathbf{Y}_{t-1}^{\ast\top }.
\end{equation}
The problem is to estimate a  functional $\mathcal S(Q)$ of the quantile function $Q$.
The interest can be also in the confidence intervals for $\mathcal S(Q),$ or in a comparison of two functionals, 
representing two possible treatments. Specifically,  $\mathcal S(Q)$ can be a risk measure, but  also an energy consumption 
over a period, water flow over a period, etc. 
Notice that if  the autoregression or other sequential dependence in the data is ignored, it can mask a possible trend in the 
measurements. Conversely, it can indicate a trend,
even when there is no one. This is worried e.g. by hydrologists (see \cite{YuePilon1} and \cite{YuePilon2} ). Hence, 
our  goal is estimating a linear functional of the quantile function of $Z$ 
even in the situation that $Z$  is unobservable, while its values are affected by the autocovariances. 
Our main mathematical tool in solving this situation is the \textit{autoregression quantile } for model (\ref{1}), combined with an \textbf{R-estimate}
of the autocovariance parameters. 
These tools are studied in \cite{Arslan}, \cite{Koul1992}, \cite{mukherjee}, \cite{Sankhya2005}, besides others. 

\section{Autoregression quantile and R-estimation}
\setcounter{equation}{3} 
Let us  work with  autoregressive model (\ref{1}) of the finite  assumed order $p$.
The $\alpha$-autoregression quantile for autoregressive model was first considered in \cite{KoulSaleh1995}, 
and later it was studied in \cite{HallinJurec1999}, \cite{Hallinetal2007} and \cite{Arslan}, among others. 
Keeping the notation of (\ref{Y}) and (\ref{Sigma}), we remind that
the $\alpha $-th autoregression quantile 
$\widehat{\boldgreek{\phi }}(\alpha )=\left(\widehat{\phi}_0(\alpha), \widehat{\phi}_1(\alpha),\ldots,\widehat{\phi}_p(\alpha)\right)^{\top} 
=\left(\widehat{\phi }_{0}(\alpha),\widehat{\boldgreek{\phi }}^{\ast\top}(\alpha)\right)^{\top}$ %
is defined as a solution of the minimization 
\begin{equation}\label{5}
\widehat{\boldgreek{\phi}}(\alpha )=\arg \min_{\mathbf{b\in\mathbb{R}}^{p+1}}\sum_{t=1}^{n}\rho_{\alpha }\left( X_{t}
-\mathbf{Y}_{t-1}^{*\top}\mathbf{b}\right)  
\end{equation}%
where $$\rho_{\alpha }(u) =\vert u\vert \left \{ \alpha \mathcal I( u>0) +(1-\alpha) \mathcal I( u<0) \right \},  \;  
u\in\R$$ for  $\alpha \in ( 0,1) $, where $\mathcal I( \cdot) $ is the indicator function. 

Koul and Saleh  \cite{KoulSaleh1995} derived the following asymptotic representation of  the $\alpha $-autoregression quantile under $\ny$:
\begin{eqnarray}
\label{36}
&&\widehat{\boldgreek\phi}_n(\alpha)%
-(F^{-1}(\alpha),\phi_1,\ldots,\phi_p)^{\top}=\\
&&n^{-1}({\boldgreek\Sigma}^*_n)^{-1}(f(F^{-1}(\alpha)))^{-1}%
\sum_{t=1}^n{\bf Y}^*_{t-1}\left(\alpha-\mathcal I[Z_t\leq F^{-1}(\alpha)]\right)+o_p(n^{-\frac 12})\nonumber
\end{eqnarray}
as $\ny,$ and the convergence is uniform over each subinterval
$[\alpha^*,1-\alpha^*]\subset (0,1).$

Among the quantile functionals of $Q$, typical are the measures of the risk in various contexts, as  in finances,  in health problems, environmental 
and technical risks. Moreover, the economists often analyze various welfare measurement functionals. Various welfare measurement functionals, 
often used in the economic and sociological studies, are described in \cite{Gastwirth2016}, \cite{Gastwirth2025},
along with the income inequalities. We shall illustrate the family of quantile functionals on the  \textsl{Conditional Value-at-Risk} $\sf CVaR$, 
a popular risk measure, also  known as the \textit{expected shortfall}. It is equal to 
\begin{equation}\label{1a}
\sf{CVaR}_{\alpha}(Z)= E\{Z|Z>Q(\alpha)\}=(1-\alpha)^{-1}\int_{\alpha}^1 Q(u)du=(1-\alpha)^{-1}\int_{Q(\alpha)}^{\infty}zdF(z).
\end{equation}
Before a detailed analysis of the quantile functionals, let us first describe  the concepts 
of the \textit{R-estimation of AR parameters}, and of its combination with the autoregression quantile. 

\subsection{R-estimation of autoregression parameters} 
The following version of the R-estimate of autoregression parameters $\boldgreek\phi$ was proposed in \cite{Koul1992} and studied in \cite{mukherjee}. 
There are alternative versions 
of R-estimator,  mutually asymptotically equivalent, si\-mi\-lar\-ly as in the linear model.  
Consider the model (\ref{1}) 
with the observations  $\mathbf Y_{t-1}=(X_{t-1}\ldots, X_{t-p})^{\top}, \; t=1,\ldots,n$, unknown autoregression parameters 
$\boldgreek{\phi}= (\phi_1,\ldots,\phi_p)^{\top}$ and \textit{i.i.d.} unobservable innovations $Z_1,\ldots,Z_n.$ The distribution function $F$ 
of $Z_t$ is unknown, but assumed to satisfy the conditions $\textbf (A1) -\textbf(A3).$
For $\mathbf b\in \R^p$, let $R_{1,\mathbf b}\ldots,R_{n,\mathbf b}$ denote the ranks of the residuals 
$X_{t}-\mathbf b^{\top}\mathbf Y_{t-1}, \; t=1,\ldots, n.$ We choose a nondecreasing  score function $J: (0,1)\mapsto\R^1$, 
\begin{equation}
\label{8}
J_{\lambda}(u)=\lambda-\mathcal I[u<\lambda],  \; \ 0<u<1
\end{equation}
with any fixed $\lambda\in(0,1).$ Parallelly, we can use another nondecreasing square-integrable score function on $(0,1)$. 
The estimator of $\boldgreek{\phi}$, denoted   $\widetilde{\boldgreek\phi}_{nR}(\lambda)$, 
is defined  as a  minimizer, over ${\mathbf b}\in{\R}^{p}$, of  the Jaeckel \cite{Jaeckel1972} measure of dispersion 
of rank residuals criterion ${\cal D}_n$,   
\begin{equation}
\label{11a}
{\mathcal D}_n({\mathbf b})=\sum_{t=1}^n(X_{t}-\mathbf b^{\top}\mathbf Y_{t-1})%
\left[J_{\lambda}\left(\frac{R_{nt}(X_{t}-\mathbf b^{\top}\mathbf Y_{t-1})}{n+1}\right)-\overline{J}_{n\lambda}\right].
\end{equation}
An alternative version of R-estimation of $\boldgreek\phi$ is proposed in \cite{Koul1992}; both versions are asymptotically equivalent.
Because the ranks are invariant to the shift in location, the same is true for the R-estimator. 
It is shown in \cite{Ossiander} that, under {\bf (A1)}-{\bf(A3)},
\begin{equation}
\label{40}
n^{-1}\sum_{t=1}^n({\bf Y}_{t-1}-\overline{\mathbf Y})%
({\bf Y}_{t-1}-\overline{\mathbf Y})^{\top}\stackrel{p}{\longrightarrow}%
\tilde{\boldgreek\Sigma} \quad \mbox{ as } \ \ny,
\end{equation}
where 
$\tilde{\boldgreek\Sigma}$ is a positively definite $p\times p$ matrix, and
\begin{equation}
\label{41}
\widetilde{\boldgreek\phi}_{n}-{\boldgreek\phi}
=n^{-1}\tilde{\boldgreek\Sigma}^{-1}(f(F^{-1}(\alpha)))^{-1}%
\sum_{t=1}^n({\mathbf Y}_{t-1}-\overline{\mathbf Y})\left(\alpha-\mathcal I[Z_t\leq F^{-1}(\alpha)]\right)+o_p(n^{-\frac 12}).
\end{equation}
Let now $\tilde{\phi}_{n0}(\alpha)$ be the $[n\alpha]$-th order statistic of the residuals
$X_t-{\mathbf Y}_{t-1}^{\top}\widetilde{\boldgreek\phi}_n, \;  t=1,\ldots,n$, for $\alpha$ running over $(0,1)$.
It is the solution of the minimization
\begin{equation}
\label{14c}
\tilde{\phi}_{n0}(\alpha)=\arg\min\left\{\sum_{t=1}^n \rho_{\alpha}\left(X_t-b-{\mathbf Y}_{t-1}^{\top}\widetilde{\boldgreek\phi}_n\right), 
\quad b\in{\R}^1\right\}
\end{equation}
where $\widetilde{\boldgreek\phi}_{n}$ is the R-estimate of ${\boldgreek\phi}.$ 
Then Lemma 1.1 
in \cite{Ossiander} helps to get rid of the unknown nuisance parameters, because it shows 
\begin{equation}
\label{22c}
\sup_{|b|\leq C}\Big\{n^{-\frac 12}\Big|\sum_{t=1}^n\Big(\mathcal I[X_t-{\bf Y}_{t-1}^{\top}\widetilde{\boldgreek\phi}_{n}<Q(\alpha)%
+n^{-\frac 12}b]-\mathcal I[Z_t<Q(\alpha)+n^{-\frac 12}b]\Big)\Big|\Big\}=o_p(1).
\end{equation}
The following asymptotic representation of $\tilde{\phi}_{n0}(\alpha)$,  as  $\ny,$ valid  under the conditions 
{\bf (A1)}-{\bf (A3)} for any $\alpha\in(0,1)$ uniformly for  $\alpha$ over any subinterval of $(0,1)$, has been demonstrated in \cite{Sankhya2005}, 
with a reference to \cite{KoulSaleh1995}: 
\begin{eqnarray}\label{16c}
&&\tilde{\phi}_{n0}(\alpha)\stackrel{p}{\longrightarrow} Q(\alpha) \\
&&\tilde{\phi}_{n0}(\alpha)=Q(\alpha) 
+\left(nf(Q(\alpha))\right)^{-1}\sum_{t=1}^n\left(\alpha-\mathcal I[Z_t<Q(\alpha)]\right)+o_p(n^{-\frac 12}).\nonumber
\end{eqnarray}
\begin{remark}
The $\tilde{\phi}_{n0}(\alpha)$ itself approximates the  \textbf{VaR} measure (\textit{value at risk})  at $\alpha$, in probability,  as $\ny$. Generally, it
approximates the quantile function of $Z$ at the point $\alpha$.
\end{remark}

\section{Estimation of the Conditional Value-at-Risk}
\setcounter{equation}{13}
The typical quantile functionals are measures of risk in various contexts, as  in finances,  in health problems, environmental and technical risks. 
We shall illustrate the estimation of the risk measures and also of other quantile functionals, on the popular 
\textsl{Conditional Value-at-Risk} 
(or \textit{Expected Shortfall}) $\sf CVaR$, defined as
\begin{equation}\label{1c}
\sf{CVaR}_{\alpha}(Z)=E\{Z|Z>Q(\alpha)\}=(1-\alpha)^{-1}\int_{\alpha}^1 Q(u)du=(1-\alpha)^{-1}\int_{Q(\alpha)}^{\infty}zdF(z).
\end{equation}
The expected shortfall  is applied in many areas; for instance in the management of water supplies, in the risk management
of the social security funds, the cash flow risk measurement for non-life insurance industry, the
financial risk in the industrial areas, operational risk in the banks, and others. 

Combining (\ref{14c}), (\ref{16c}) and (\ref{22c}) with the result of Bassett et al. in \cite{Bassett2004}, we obtain the following estimator of the
expected shortfall:
\begin{theorem}
Under the conditions \textbf{(A1)}- \textbf{(A3)},
\begin{eqnarray}\label{cvar1}
&&\widehat{\sf{CVAR}}_n= 
{\left\lfloor n(1-\alpha)\right\rfloor}^{-1}\min\left\{\sum_{t=1}^n \rho_{\alpha}(X_t-{\bf Y}_{t-1}^{\top}\widetilde{\boldgreek\phi}_n-\xi): 
\xi\in\R^{1}\right\}\nonumber\\
&&+n^{-1}\sum_{t=1}^n (X_t-{\bf Y}_{t-1}^{\top}\widetilde{\boldgreek\phi}_n)\\
 &&= \sf{{CVAR}}_{\alpha}+ o_p(n^{-1/2}) \;  \mbox{ as } \; \ny.\nonumber
\end{eqnarray}
\end{theorem}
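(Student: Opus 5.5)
The plan is to rewrite $\widehat{\sf CVAR}_n$ as an upper trimmed mean of the residuals, and then compare it with the same trimmed mean built from the unobservable innovations $Z_t$. That oracle mean is the estimator of Bassett et al. \cite{Bassett2004}.

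\emph{Step 1 (algebraic identity).} Put $\hat u_t=X_t-{\bf Y}_{t-1}^{\top}\widetilde{\boldgreek\phi}_n$ and $k=\lfloor n\alpha\rfloor$. The minimum over $\xi$ in (\ref{cvar1}) is attained at $\xi=\tilde\phi_{n0}(\alpha)=\hat u_{(k)}$, as in (\ref{14c}). Since $\rho_\alpha(v)=\alpha v-v\,\mathcal I[v<0]$, we get
$$\sum_{t=1}^n\rho_\alpha(\hat u_t-\hat u_{(k)})=\alpha\sum_{t=1}^n\hat u_t-\sum_{t\le k}\hat u_{(t)}+(k-n\alpha)\,\hat u_{(k)}.$$
Adding $(1-\alpha)\sum_t\hat u_t$ to both sides gives $\sum_{t>k}\hat u_{(t)}$ up to the term $(k-n\alpha)\hat u_{(k)}$. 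That term is $O_p(1)$ by (\ref{16c}). Dividing by $\lfloor n(1-\alpha)\rfloor$ and absorbing the floor discrepancies, which are $O_p(n^{-1})$, we obtain
$$\widehat{\sf CVAR}_n=\frac{1}{n(1-\alpha)}\sum_{t>k}\hat u_{(t)}+O_p(n^{-1}).$$

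\emph{Step 2 (removing the nuisance autoregression).} Write $\hat u_t=Z_t-d_t$ with $d_t={\bf Y}_{t-1}^{\top}\boldsymbol\Delta_n$ and $\boldsymbol\Delta_n=\widetilde{\boldgreek\phi}_n-\boldgreek\phi=O_p(n^{-1/2})$ by (\ref{41}). The mean term satisfies $n^{-1}\sum_t\hat u_t=\bar Z_n-\overline{\mathbf Y}^{\top}\boldsymbol\Delta_n$. Under (A2)–(A3) the process is stationary with mean zero, so $\overline{\mathbf Y}=O_p(n^{-1/2})$ and the correction is $O_p(n^{-1})$.

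For the minimum term, compare $M_n=\min_\xi\sum_t\rho_\alpha(Z_t-d_t-\xi)$ with $M_n^0=\min_\xi\sum_t\rho_\alpha(Z_t-\xi)$. Both minima are attained at points $Q(\alpha)+O_p(n^{-1/2})$, by (\ref{16c}) and its iid analogue. Reparametrise $\xi=Q(\alpha)+n^{-1/2}b$ with $|b|\le C$ and apply Knight's identity
$$\rho_\alpha(u-v)-\rho_\alpha(u)=-v\,\psi_\alpha(u)+\int_0^v\big(\mathcal I[u\le s]-\mathcal I[u\le 0]\big)\,ds .$$
This gives a linear term $-\boldsymbol\Delta_n^{\top}\sum_t{\bf Y}_{t-1}\psi_\alpha(Z_t-\xi)$. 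The sum is $O_p(n^{1/2})$ by a martingale CLT, since $E\psi_\alpha(Z_t-Q(\alpha))=0$ and $Z_t$ is independent of ${\bf Y}_{t-1}$; hence the linear term is $O_p(1)$. Replacing $\psi_\alpha(Z_t-\xi)$ by $\psi_\alpha(Z_t-Q(\alpha))$ costs a further $O_p(1)$ uniformly in $|b|\le C$. The integral remainder is also $O_p(1)$ uniformly in $|b|\le C$: each integrand is supported on a window of width $O_p(n^{-1/2})$, and the uniform linearity (\ref{22c}) controls the indicator increments. Hence $M_n-M_n^0=O_p(1)$, and after division by $n(1-\alpha)$ the difference is $O_p(n^{-1})$.

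\emph{Step 3 (conclusion).} Steps 1–2 show that $\widehat{\sf CVAR}_n$ coincides, up to $o_p(n^{-1/2})$, with the oracle estimator of \cite{Bassett2004}:
$$\frac{1}{\lfloor n(1-\alpha)\rfloor}\min_\xi\sum_t\rho_\alpha(Z_t-\xi)+\bar Z_n .$$
By Step 1 applied to the $Z_t$, this equals the upper trimmed mean $\frac{1}{n(1-\alpha)}\sum_{t>k}Z_{(t)}$ up to $O_p(n^{-1})$.

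The asymptotic representation of this trimmed mean is then classical. Combining the Bahadur representation (\ref{16c}) of $Z_{(k)}$ with the expansion of $\int_{\hat Q}^\infty z\,dF_n$, it equals ${\sf CVAR}_\alpha$ plus the iid average
$$n^{-1}\sum_t\frac{(Z_t-Q(\alpha))_+-E(Z-Q(\alpha))_+ +(1-\alpha)\big(Q(\alpha)-{\sf CVAR}_\alpha\big)}{1-\alpha},$$
with error $o_p(n^{-1/2})$. I read the $o_p(n^{-1/2})$ in the statement as the remainder after this oracle (linear) term, since the sampling error itself is of exact order $n^{-1/2}$.

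I expect the main obstacle to be the uniform control of the Knight remainder in Step 2. The perturbations $d_t$ depend on the whole sample through $\widetilde{\boldgreek\phi}_n$, so one cannot condition on them. The plan is to establish the bound uniformly over $\|\boldsymbol\Delta\|\le Cn^{-1/2}$ and $|b|\le C$, using (\ref{22c}) together with the continuity of $f$ from (A1) and the moment condition (A2).
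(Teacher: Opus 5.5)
Your reading of the conclusion is forced, not merely convenient. Even with the true innovations, $\sqrt n(\widehat{\mathrm{CVaR}}_n-\mathrm{CVaR}_\alpha)$ has a nondegenerate normal limit, with variance $\mathrm{Var}((Z-Q(\alpha))_+)/(1-\alpha)^2>0$ under (A1)--(A2). So the literal claim $\widehat{\mathrm{CVaR}}_n=\mathrm{CVaR}_\alpha+o_p(n^{-1/2})$ is false. What can be proved is that the feasible estimator agrees with the oracle estimator of Bassett et al.\ up to $o_p(n^{-1/2})$.

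The paper's proof is only the sketch ``take the empirical version of the Bassett et al.\ identity and combine (\ref{14c}), (\ref{16c}), (\ref{22c})'', and it does not address this point. Your Steps 1--2 follow the same skeleton and make it rigorous. In particular, you see that one must compare the \emph{values} of the two check-loss minima, not just the minimizers, and (\ref{22c}) alone does not give that. The linear term $\Delta_n^{\top}\sum_t\mathbf Y_{t-1}\psi_\alpha(Z_t-Q(\alpha))$ needs the root-$n$ rate (\ref{41}) plus the martingale bound. The mean term needs $\overline{\mathbf Y}=O_p(n^{-1/2})$, i.e.\ $EZ=0$.

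The step you flag as the main obstacle has a shortcut. Put $G_n(\xi)=\sum_t\rho_\alpha(\hat u_t-\xi)$ and $G_n^0(\xi)=\sum_t\rho_\alpha(Z_t-\xi)$. The a.e.\ derivative of $G_n-G_n^0$ in $\xi$ is exactly $\sum_t(\mathcal I[\hat u_t<\xi]-\mathcal I[Z_t<\xi])$, which is the sum in (\ref{22c}). Integrating from $Q(\alpha)$ to $Q(\alpha)+n^{-1/2}b$ gives $\sup_{|b|\le C}|(G_n-G_n^0)(Q(\alpha)+n^{-1/2}b)-(G_n-G_n^0)(Q(\alpha))|=o_p(1)$. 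Hence Knight's identity is needed only at the single point $\xi=Q(\alpha)$. There the remainder is nonnegative and at most $\sum_t|d_t|\,\mathcal I[|Z_t-Q(\alpha)|\le|d_t|]$. On $\{\|\Delta_n\|\le Kn^{-1/2}\}$ this is dominated by a statistic free of $\Delta_n$. Its expectation is $O(1)$, by independence of $Z_t$ and $\mathbf Y_{t-1}$, local boundedness of $f$, and $E\|\mathbf Y_{t-1}\|^2<\infty$. So the data dependence of $d_t$ causes no difficulty.

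There is, however, an algebra error in Step 3. Since $E(Z-Q(\alpha))_+=(1-\alpha)(\mathrm{CVaR}_\alpha-Q(\alpha))$, your two centering terms $-E(Z-Q(\alpha))_+$ and $(1-\alpha)(Q(\alpha)-\mathrm{CVaR}_\alpha)$ are the same quantity counted twice. As written, each summand has mean $Q(\alpha)-\mathrm{CVaR}_\alpha\neq0$, so your representation would make the estimator consistent for $Q(\alpha)$. The correct linear term is $n^{-1}\sum_t\{(Z_t-Q(\alpha))_+-E(Z-Q(\alpha))_+\}/(1-\alpha)$. To see this, take the oracle trimmed mean $T_n=\{n(1-\alpha)\}^{-1}\sum_{t>k}Z_{(t)}$ and write $T_n=g_n(\hat Q)+O_p(n^{-1})$, where $g_n(c)=c+\{n(1-\alpha)\}^{-1}\sum_t(Z_t-c)_+$ and $\hat Q=Z_{(k)}$. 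Since $g_n'$ is $O_p(n^{-1/2})$ between $\hat Q$ and $Q(\alpha)$, you get $g_n(\hat Q)=g_n(Q(\alpha))+O_p(n^{-1})$.
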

\begin{proof}
According to \cite{Bassett2004},
\begin{equation}\label{cvar}
{\sf{CVAR}}_\alpha=(1-\alpha)^{-1}\min\left\{\rho_\alpha(Z-\xi):\xi\in\R^1\right\}+E Z.
\end{equation}
Hence the estimator of the shortfall is obtained from its empirical version,  after combining (\ref{14c}), (\ref {16c}) and (\ref{22c}).
\end{proof}
Similarly we can proceed with estimation of other risk measures, or with other linear quantile functionals in the autoregression model. 

\section{Simulation Study}
\label{sec:sim}
\setcounter{equation}{16}

We investigate the finite-sample performance of the proposed estimation strategy for
tail-risk functionals of the innovation distribution in autoregressive models with
unobservable innovations. The simulation design follows the two-step construction developed
in the methodological part of the paper. The nuisance autoregression is handled by a
rank-based $R$-estimator of the AR slope parameters obtained via minimization of a
Jaeckel-type dispersion with the step score
\(
J_{\lambda}(u)=\lambda-\mathbf{1}(u<\lambda).
\)
In the present simulation design without an additional location component, the innovation
tail risk is estimated by applying the empirical $CVaR_{\alpha}$ functional directly to
the raw residuals obtained after slope estimation.

For calibration purposes, we additionally report an infeasible \emph{oracle} reference,
available only in simulations, which uses the true AR slopes in the residual construction.
This benchmark isolates the finite-sample impact of estimating the nuisance autoregression.

We consider the autoregressive model
\begin{equation}
  X_t = \phi_1 X_{t-1} + \cdots + \phi_p X_{t-p} + Z_t,
  \label{eq:ar_model_sim}
\end{equation}
with i.i.d.\ innovations $(Z_t)$. The settings include AR(1) with $\phi=0.5$ and $\phi=0.8$,
and AR(2) with $(\phi_1,\phi_2)=(0.5,-0.2)$. Sample sizes are $n\in\{100,200,500\}$.
Each series is generated with a burn-in period and only the last $n$ observations are used
for estimation. We focus on high-risk levels $\alpha\in\{0.95,0.99\}$.

To reflect ideal, heavy-tailed, and contaminated environments, we consider four innovation
distributions: Gaussian, standardized Student-$t$ with $\nu=3$, a scale-mixture
$0.9N(0,1)+0.1N(0,3^2)$, and a contamination scheme with rare large shocks.

For each simulated sample we compute the proposed plug-in estimator of $CVaR_{\alpha}(Z)$
based on raw residuals obtained from the $R$-estimated slopes. The oracle version keeps the
same construction but substitutes the true AR slopes. The experiment is repeated $R$ times
and we report empirical bias and RMSE. Target values of $CVaR_{\alpha}(Z)$ are computed
analytically in the Gaussian case and by a large Monte Carlo approximation otherwise.

The results (Tables~\ref{tab:sim_ar1_phi05}--\ref{tab:sim_ar2}) show small biases for
$\alpha=0.95$ across all models and innovation scenarios. As expected, estimation becomes
more challenging for $\alpha=0.99$, particularly under heavy-tailed, mixture, and
contaminated innovations, which leads to larger RMSEs in smaller samples. Importantly, the
feasible $R$-based estimator closely tracks the oracle reference in all settings, indicating
that the finite-sample cost of estimating the nuisance autoregression is negligible for the
proposed construction. The dominant component of estimation error thus stems from the intrinsic
finite-sample variability of tail-risk estimation itself.


\begin{table}[t]
\centering
\caption{Simulation results for AR(1) with $\phi=0.5$: bias and RMSE of
$\widehat{CVaR}_{\alpha}$ computed \emph{from raw residuals}. The feasible $R$-based
procedure is compared to the oracle reference.}
\label{tab:sim_ar1_phi05}
\begin{tabular}{lllrrrr}
\hline
n & scenario & $\alpha$ & bias (R) & RMSE (R) & bias (Oracle) & RMSE (Oracle)
 \\
\hline
100 & Contamination & 0.95 & -0.0905 & 1.8427 & -0.0886 & 1.8404 \\
200 & Contamination & 0.95 & 0.0532 & 1.2925 & 0.0544 & 1.2932 \\
500 & Contamination & 0.95 & -0.0435 & 0.8360 & -0.0426 & 0.8361 \\
100 & Mixture & 0.95 & -0.0908 & 0.7230 & -0.0896 & 0.7187 \\
200 & Mixture & 0.95 & -0.1014 & 0.5820 & -0.1008 & 0.5831 \\
500 & Mixture & 0.95 & 0.0029 & 0.3587 & 0.0028 & 0.3591 \\
100 & Normal & 0.95 & -0.0361 & 0.2678 & -0.0404 & 0.2616 \\
200 & Normal & 0.95 & -0.0180 & 0.1859 & -0.0190 & 0.1843 \\
500 & Normal & 0.95 & -0.0070 & 0.1207 & -0.0082 & 0.1195 \\
100 & t3 & 0.95 & 0.0109 & 0.8582 & 0.0159 & 0.8553 \\
200 & t3 & 0.95 & -0.0021 & 0.5705 & -0.0017 & 0.5669 \\
500 & t3 & 0.95 & -0.0029 & 0.3232 & -0.0025 & 0.3225 \\
100 & Contamination & 0.99 & -0.7646 & 2.0033 & -0.7549 & 1.9986 \\
200 & Contamination & 0.99 & -0.2978 & 1.1435 & -0.2967 & 1.1395 \\
500 & Contamination & 0.99 & -0.0746 & 0.4935 & -0.0744 & 0.4920 \\
100 & Mixture & 0.99 & -0.7307 & 1.7825 & -0.7163 & 1.7772 \\
200 & Mixture & 0.99 & -0.4333 & 1.2381 & -0.4254 & 1.2349 \\
500 & Mixture & 0.99 & -0.2784 & 0.8135 & -0.2768 & 0.8121 \\
100 & Normal & 0.99 & -0.2047 & 0.4591 & -0.2055 & 0.4638 \\
200 & Normal & 0.99 & -0.1102 & 0.3241 & -0.1207 & 0.3223 \\
500 & Normal & 0.99 & -0.0553 & 0.2115 & -0.0569 & 0.2124 \\
100 & t3 & 0.99 & -0.7127 & 2.0123 & -0.7076 & 2.0155 \\
200 & t3 & 0.99 & -0.4838 & 1.5916 & -0.4850 & 1.5904 \\
500 & t3 & 0.99 & -0.1806 & 1.1102 & -0.1799 & 1.1091 \\
\hline
\end{tabular}
\end{table}

\begin{table}[t]
\centering
\caption{Simulation results for AR(1) with $\phi=0.8$: bias and RMSE of
$\widehat{CVaR}_{\alpha}$ computed \emph{from raw residuals}. The feasible $R$-based
procedure is compared to the oracle reference.}
\label{tab:sim_ar1_phi08}
\begin{tabular}{lllrrrr}
\hline
n & scenario & $\alpha$ & bias (R) & RMSE (R) & bias (Oracle) & RMSE (Oracle)
 \\
\hline
100 & Contamination & 0.95 & -0.0906 & 1.8495 & -0.0886 & 1.8404 \\
200 & Contamination & 0.95 & 0.0527 & 1.2940 & 0.0544 & 1.2932 \\
500 & Contamination & 0.95 & -0.0436 & 0.8361 & -0.0426 & 0.8361 \\
100 & Mixture & 0.95 & -0.0956 & 0.7346 & -0.0896 & 0.7187 \\
200 & Mixture & 0.95 & -0.1005 & 0.5853 & -0.1008 & 0.5831 \\
500 & Mixture & 0.95 & 0.0027 & 0.3599 & 0.0028 & 0.3591 \\
100 & Normal & 0.95 & -0.0484 & 0.2633 & -0.0435 & 0.2450 \\
200 & Normal & 0.95 & -0.0199 & 0.1903 & -0.0190 & 0.1843 \\
500 & Normal & 0.95 & -0.0069 & 0.1207 & -0.0082 & 0.1195 \\
100 & t3 & 0.95 & 0.0112 & 0.8628 & 0.0159 & 0.8553 \\
200 & t3 & 0.95 & -0.0014 & 0.5701 & -0.0017 & 0.5669 \\
500 & t3 & 0.95 & -0.0030 & 0.3232 & -0.0025 & 0.3225 \\
100 & Contamination & 0.99 & -0.7658 & 2.0102 & -0.7549 & 1.9986 \\
200 & Contamination & 0.99 & -0.2949 & 1.1414 & -0.2967 & 1.1395 \\
500 & Contamination & 0.99 & -0.0750 & 0.4931 & -0.0744 & 0.4920 \\
100 & Mixture & 0.99 & -0.7426 & 1.7897 & -0.7163 & 1.7772 \\
200 & Mixture & 0.99 & -0.4382 & 1.2369 & -0.4254 & 1.2349 \\
500 & Mixture & 0.99 & -0.2800 & 0.8126 & -0.2768 & 0.8121 \\
100 & Normal & 0.99 & -0.1253 & 0.4542 & -0.1411 & 0.4581 \\
200 & Normal & 0.99 & -0.1153 & 0.3250 & -0.1207 & 0.3223 \\
500 & Normal & 0.99 & -0.0573 & 0.2124 & -0.0569 & 0.2124 \\
100 & t3 & 0.99 & -0.7219 & 2.0174 & -0.7076 & 2.0155 \\
200 & t3 & 0.99 & -0.4866 & 1.5915 & -0.4850 & 1.5904 \\
500 & t3 & 0.99 & -0.1815 & 1.1109 & -0.1799 & 1.1091 \\
\hline
\end{tabular}
\end{table}

\begin{table}[t]
\centering
\caption{Simulation results for AR(2) with $(\phi_1,\phi_2)=(0.5,-0.2)$: bias and RMSE of
$\widehat{CVaR}_{\alpha}$ computed \emph{from raw residuals}. The feasible $R$-based
procedure is compared to the oracle reference.}
\label{tab:sim_ar2}
\begin{tabular}{lllrrrr}
\hline
n & scenario & $\alpha$ & bias (R) & RMSE (R) & bias (Oracle) & RMSE (Oracle)
 \\
\hline
100 & Contamination & 0.95 & -0.1223 & 1.8410 & -0.1170 & 1.8402 \\
200 & Contamination & 0.95 & 0.0398 & 1.2789 & 0.0453 & 1.2857 \\
500 & Contamination & 0.95 & -0.0472 & 0.8359 & -0.0455 & 0.8365 \\
100 & Mixture & 0.95 & -0.1054 & 0.7187 & -0.1001 & 0.7178 \\
200 & Mixture & 0.95 & -0.1082 & 0.5790 & -0.1071 & 0.5844 \\
500 & Mixture & 0.95 & -0.0009 & 0.3582 & -0.0015 & 0.3591 \\
100 & Normal & 0.95 & -0.0335 & 0.2582 & -0.0425 & 0.2529 \\
200 & Normal & 0.95 & -0.0175 & 0.1861 & -0.0225 & 0.1846 \\
500 & Normal & 0.95 & -0.0075 & 0.1211 & -0.0089 & 0.1197 \\
100 & t3 & 0.95 & -0.0003 & 0.8618 & 0.0069 & 0.8572 \\
200 & t3 & 0.95 & -0.0086 & 0.5693 & -0.0063 & 0.5674 \\
500 & t3 & 0.95 & -0.0068 & 0.3223 & -0.0051 & 0.3227 \\
100 & Contamination & 0.99 & -0.7800 & 2.0015 & -0.7669 & 1.9995 \\
200 & Contamination & 0.99 & -0.3043 & 1.1439 & -0.3022 & 1.1441 \\
500 & Contamination & 0.99 & -0.0795 & 0.4986 & -0.0792 & 0.4973 \\
100 & Mixture & 0.99 & -0.7469 & 1.7758 & -0.7246 & 1.7739 \\
200 & Mixture & 0.99 & -0.4415 & 1.2380 & -0.4313 & 1.2368 \\
500 & Mixture & 0.99 & -0.2822 & 0.8146 & -0.2787 & 0.8130 \\
100 & Normal & 0.99 & -0.1657 & 0.5024 & -0.1858 & 0.4990 \\
200 & Normal & 0.99 & -0.1096 & 0.3261 & -0.1221 & 0.3234 \\
500 & Normal & 0.99 & -0.0551 & 0.2094 & -0.0582 & 0.2125 \\
100 & t3 & 0.99 & -0.7218 & 2.0238 & -0.7112 & 2.0195 \\
200 & t3 & 0.99 & -0.4896 & 1.5882 & -0.4882 & 1.5931 \\
500 & t3 & 0.99 & -0.1830 & 1.1097 & -0.1812 & 1.1096 \\
\hline
\end{tabular}
\end{table}

\section{Numerical Illustration}
\label{sec:num}

To complement the simulation evidence, we provide a brief numerical illustration based on
real hydrological data. We use daily mean discharge observations (QD) published by the
Czech Hydrometeorological Institute (CHMI), the national authority responsible for
hydrological and meteorological monitoring in the Czech Republic.  The analyzed record comes from the CHMI gauge
\emph{Labsk\' y d\r{u}l} on the \emph{Labe} river, identified by WIGOS ID \texttt{0-203-1-000400}
(station ID 000400). The available historical daily series covers 2021--2024. To stabilize
variability and reduce the impact of occasional extreme levels, we work with the
transformed series $\log(1+QD)$.

We model the transformed series by an AR(1) nuisance autoregression and estimate its slope
using the rank-based $R$-estimator obtained via minimization of a Jaeckel-type dispersion
with the step score $J_{\lambda}(u)=\lambda-\mathbf{1}(u<\lambda)$, with $\lambda=0.5$.
Based on the effective sample size $n_{\mathrm{eff}}=1167$, the fitted persistence is
$\hat\phi_{1}=0.8712$, indicating substantial short-term dependence in the daily discharge
dynamics. Following the proposed plug-in strategy, we compute tail-risk measures directly
from the \emph{raw residuals} of this fitted autoregression.

The resulting residual tail-risk estimates equal
$\widehat{CVaR}_{0.95}=0.3352$ and $\widehat{CVaR}_{0.99}=0.5855$ on the
$\log(1+QD)$ scale, illustrating a marked increase in extreme-risk intensity when moving
from the 95\% to the 99\% level. Table~\ref{tab:chmu_num} summarizes the numerical results
and Fig.~\ref{fig:chmu_num} highlights dates corresponding to raw residuals exceeding
$\widehat{VaR}_{0.99}$.

\begin{table}[t]
\centering
\caption{Rank-based AR(1) slope estimate and raw-residual tail-risk estimates for the CHMI
daily discharge series at \emph{Labsk\' y d\r{u}l} (Labe), 2021--2024, on the $\log(1+QD)$ scale.}
\label{tab:chmu_num}
\begin{tabular}{llrrrrr}
\hline
Gauge (river) & Period & $p$ & $n_{\mathrm{eff}}$ & $\hat\phi_{1}$ &
$\widehat{CVaR}_{0.95}$ & $\widehat{CVaR}_{0.99}$ \\
\hline
Labsk\' y d\r{u}l (Labe) & 2021--2024 & 1 & 1167 & 0.8712 & 0.3352 & 0.5855 \\
\hline
\end{tabular}
\end{table}

\begin{figure}[t]
\centering
\includegraphics[width=\textwidth]{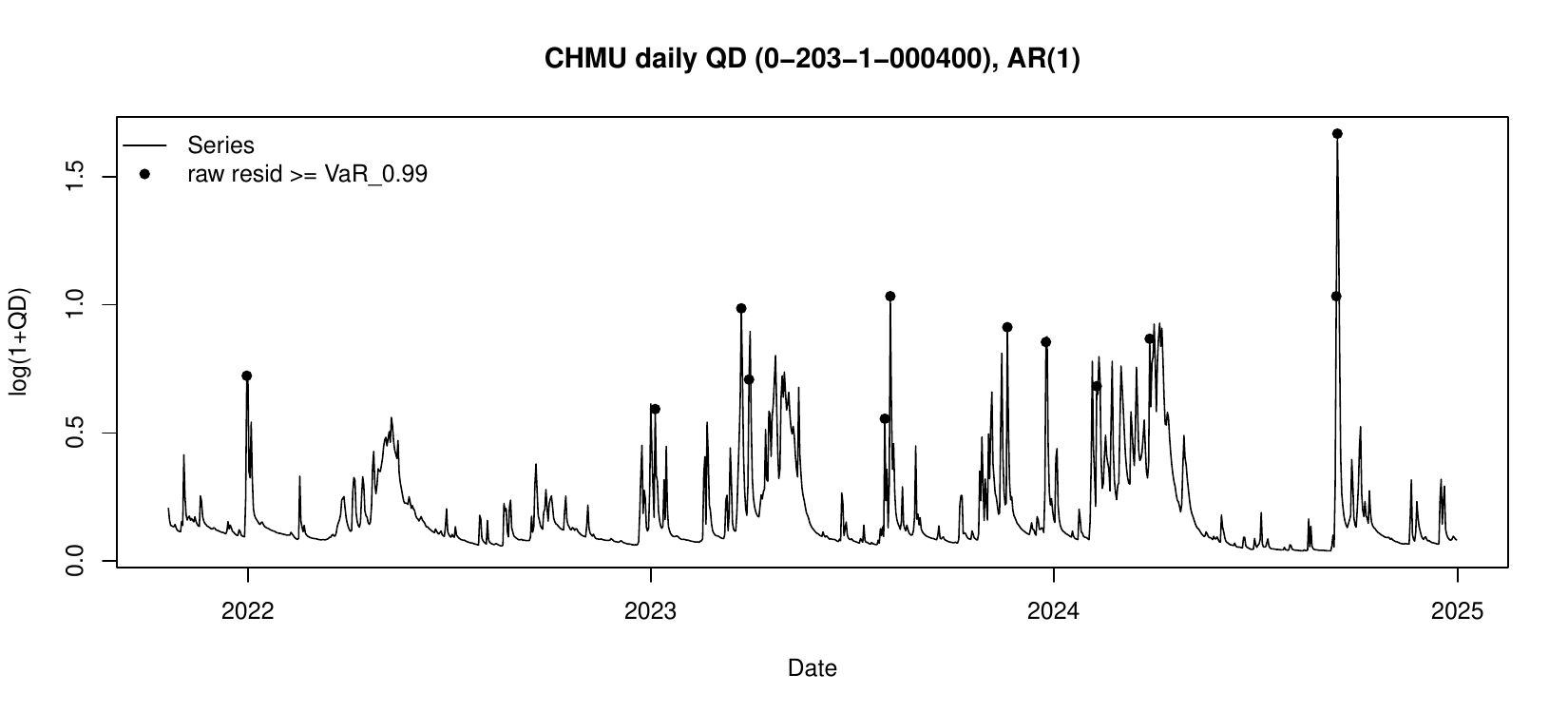}
\caption{CHMI daily mean discharge series (QD) at \emph{Labsk\' y d\r{u}l} (Labe), 2021--2024,
shown on the transformed scale $\log(1+QD)$. Filled points mark dates for which the
\emph{raw} AR(1) residuals obtained from the rank-based $R$-estimated slope exceed the
empirical $\widehat{VaR}_{0.99}$.}
\label{fig:chmu_num}
\end{figure}

\section*{Conclusion.}
We considered estimation of linear quantile functionals of the innovation distribution in
autoregressive models with unknown slope parameters, where the innovations themselves are
unobservable. The proposed construction combines autoregression quantiles with rank-based
$R$-estimation of the nuisance autoregression and yields a simple plug-in procedure for
risk measures such as $\mathrm{CVaR}_{\alpha}$. The simulation study indicates that the
finite-sample cost of estimating the nuisance AR component is small: the feasible
$R$-based estimator closely tracks the oracle benchmark even under heavy tails and
contamination. A brief numerical illustration with CHMI daily discharge data from the
Labsk\' y d\r{u}l gauge on the Labe river further supports the practical relevance of the method.
The approach is broadly applicable to risk and other tail-sensitive indicators in settings
where serial dependence must be accounted for and innovation-level inference is of primary
interest.

\section*{Acknowledgements}
This work was supported by the Czech Science Foundation under Grant 22-03636S.

\end{document}